\documentclass[10pt,twocolumn]{IEEEtran}

\usepackage{amsmath, mathrsfs}
\usepackage{amsfonts}
\usepackage{amssymb}
\usepackage{graphicx}
\usepackage{color}
\usepackage{multirow}
\usepackage{graphicx}


\usepackage{algorithm}
\usepackage{algpseudocode}
\usepackage{pifont}
\usepackage{varwidth}

\newcommand{\myhash}{%
  {\settoheight{\dimen0}{C}\kern-.05em\, \resizebox{!}{\dimen0}{\raisebox{\depth}{\#}}}}

\usepackage{caption}
\usepackage{subcaption}

\usepackage[numbers,sort&compress]{natbib}

%

\usepackage{multicol}

\makeatletter
\setlength{\@fptop}{0pt}
\makeatother



\def\mindex#1{\index{#1}}



%
%

\def\sq{\hbox{\rlap{$\sqcap$}$\sqcup$}}
\def\qed{\ifmmode\sq\else{\unskip\nobreak\hfil
\penalty50\hskip1em\null\nobreak\hfil\sq
\parfillskip=0pt\finalhyphendemerits=0\endgraf}\fi\medskip}


\long\def\defbox#1{\framebox[.9\hsize][c]{\parbox{.85\hsize}{%
\parindent=0pt
\baselineskip=12pt plus .1pt      
\parskip=6pt plus 1.5pt minus 1pt 
 #1}}}


\long\def\beginbox#1\endbox{\subsection*{}%
\hbox{\hspace{.05\hsize}\defbox{\medskip#1\bigskip}}%
\subsection*{}}

\def\endbox{}


\def\diag{{\text{diag}}}

\def\tr{\mathsf{Tr}}

\newsavebox{\junk}
\savebox{\junk}[1.6mm]{\hbox{$|\!|\!|$}}

\def\argmin{\mathop{\rm arg\, min}}

\def\argmax{\mathop{\rm arg\, max}}




\newcommand{\field}[1]{\mathbb{#1}}

\def\Re{\field{R}}

\def\ind{\field{I}}

\def\ZZ{\field{Z}}



\def\bC{{\mathbb C}}

\def\bE{{\mathbb E}}

\def\bH{{\mathbb H}}

\def\bP{{\mathbb P}}

\def\bR{{\mathbb R}}

\def\bT{{\mathbb T}}

\def\bV{{\mathbb V}}

\def\bfA{{\bf A}}

\def\bfC{{\bf C}}
\def\bfD{{\bf D}}

\def\bfF{{\bf F}}

\def\bfI{{\bf I}}

\def\bfN{{\bf N}}

\def\bfP{{\bf P}}

\def\bfS{{\bf S}}

\def\bfU{{\bf U}}

\def\bfX{{\bf X}}

\def\bfa{{\bf a}}

\def\bfc{{\bf c}}

\def\bfe{{\bf e}}
\def\bff{{\bf f}}

\def\bfn{{\bf n}}

\def\bfr{{\bf r}}

\def\bfu{{\bf u}}
\def\bfv{{\bf v}}
\def\bfw{{\bf w}}
\def\bfx{{\bf x}}
\def\bfy{{\bf y}}

\def\frkD{{\mathfrak{D}}}

\def\frkF{{\mathfrak{F}}}

\def\frka{{\mathfrak a}}

\def\frkc{{\mathfrak c}}

\def\frkr{{\mathfrak r}}





\def\sfH{{\sf H}}

\def\sfa{{\sf a}}

\def\sfr{{\sf r}}

\def\bfmath#1{{\mathchoice{\mbox{\boldmath$#1$}}%
{\mbox{\boldmath$#1$}}%
{\mbox{\boldmath$\scriptstyle#1$}}%
{\mbox{\boldmath$\scriptscriptstyle#1$}}}}




\def\bfmY{\bfmath{Y}}

\def\bfmhhaY{\bfmath{\hhaY}} 
\def\bfmhhaY{\hbox to 0pt{$\widehat{\bfmY}$\hss}\widehat{\phantom{\raise 1.25pt\hbox{$\bfmY$}}}}









\def\til={{\widetilde =}}



\def\clA{{\cal A}}
\def\clB{{\cal B}}
\def\clC{{\cal C}}

\def\clF{{\cal F}}

\def\clN{{\cal N}}

\def\clX{{\cal X}}






 \def\FRAC#1#2#3{\genfrac{}{}{}{#1}{#2}{#3}}

\def\ddtp{{\mathchoice{\FRAC{1}{d^{\hbox to 2pt{\rm\tiny +\hss}}}{dt}}%
{\FRAC{1}{d^{\hbox to 2pt{\rm\tiny +\hss}}}{dt}}%
{\FRAC{3}{d^{\hbox to 2pt{\rm\tiny +\hss}}}{dt}}%
{\FRAC{3}{d^{\hbox to 2pt{\rm\tiny +\hss}}}{dt}}}}

\def\average#1,#2,{{1\over #2} \sum_{#1}^{#2}}

\def\eye(#1){{\bf(#1)}\quad}

\def\var{{\bV\sfa\sfr}}

\newtheorem{theorem}{{\bf Theorem}}[section]

\newtheorem{proposition}[theorem]{{\bf Proposition}}

\def\eq#1/{(\ref{e:#1})}

\newcommand{\inp}[2]{{\langle #1, #2 \rangle}}
\newcommand{\inpr}[2]{{\langle #1, #2 \rangle}_\bR}

\newcommand{\beqn}[1]{\notes{#1}%
\begin{eqnarray} \elabel{#1}}

\newcommand{\eeqn}{\end{eqnarray} }

\newcommand{\beq}[1]{\notes{#1}%
\begin{equation}\elabel{#1}}

\newcommand{\eeq}{\end{equation}}

\def\bdes{\begin{description}}
\def\edes{\end{description}}




%

\newcounter{rmnum}

\newcounter{anum}


%
{\end{list}}

\def\ass(#1:#2){(#1\ref{#1:#2})}

\def\ritem#1{
\item[{\sf \ass(\current_model:#1)}]
}

\newenvironment{recall-ass}[1]{%
\begin{description}
\def\current_model{#1}}{
\end{description}
}





\usepackage{tikz}
\usepackage{pgfplots}
\pgfplotsset{compat=newest}
\usetikzlibrary{patterns}
\usetikzlibrary{positioning}
\usetikzlibrary{datavisualization}
\usetikzlibrary{datavisualization.formats.functions}
\usetikzlibrary{backgrounds}
\usetikzlibrary{shapes,snakes}

\def\lcav{{L_{\text{cav}}}}


%


\def\herm{{\sfH}}
\def\sdet{{\mathsf{det}}}
\def\snr{{\mathsf{snr}}}


\newcommand{\range}[2]{{\text{$#1$\,:\,$#2$}}}

\def\cg{{\clC\clN}}


%
\setlength\unitlength{1mm}

\long\def\comment#1{}


\newfont{\bbb}{msbm10 scaled 700}

\newfont{\bb}{msbm10 scaled 1100}



\newcommand{\xv}{{\bf x}}


\newcommand{\Id}{{\bf I}}

\newcommand{\Xm}{{\bf X}}


\newcommand{\Xc}{{\cal X}}


\newcommand{\Lambdam}{\hbox{\boldmath$\Lambda$}}





\newcommand{\cov}{{\hbox{cov}}}

\renewcommand{\Re}{{\rm Re}}
\renewcommand{\Im}{{\rm Im}}

\newcommand{\transp}{{\sf T}}




\begin{document}

\title{An Active-Sensing Approach to Channel Vector Subspace Estimation in
mm-Wave Massive MIMO Systems}
\author{\IEEEauthorblockN{Saeid Haghighatshoar\IEEEauthorrefmark{1},
Giuseppe Caire\IEEEauthorrefmark{1}}\\
\IEEEauthorblockA{\IEEEauthorrefmark{1}\IEEEauthorblockA{Communications and Information Theory Group, Technische Universit\"{a}t Berlin}}\\
Emails: saeid.haghighatshoar@tu-berlin.de, caire@tu-berlin.de}

\maketitle
\begin{abstract}
Millimeter-wave (mm-Wave) cellular systems are a promising option for a very high data rate communication because of the large bandwidth available at mm-Wave frequencies. Due to the large path-loss exponent in the mm-Wave range of the spectrum, directional beamforming with a large antenna gain is necessary at the transmitter, the receiver or both for capturing sufficient signal power.
This in turn implies that fast and robust channel estimation plays a central role in systems performance since without a reliable estimate of the channel state the received \textit{signal-to-noise ratio} (SNR) would be much lower than the minimum necessary for a reliable communication.

In this paper, we mainly focus on single-antenna users and a multi-antenna base-station. We propose an adaptive sampling scheme to speed up the user's signal subspace estimation. In our scheme, the beamforming vector for taking every new sample is adaptively selected based on all the previous beamforming vectors and the resulting output observations. We apply the theory of optimal design of experiments in statistics to design an adaptive algorithm for estimating the signal subspace of each user. The resulting subspace estimates for different users can be exploited to efficiently communicate to the users and to manage the interference. We cast our proposed  algorithm as low-complexity optimization problems, and illustrate its efficiency via numerical simulations.

\end{abstract}

\section{Introduction}
Consider a mm-Wave cellular system consisting of a multi-antenna base-station with $M$ antennas serving $K$ single-antenna users. For simplicity, we focus here on a flat-fading channel in which the bandwidth of the signal is less than the channel's coherence bandwidth\footnote{For example, our model may be regarded as the channel resulting from a single subcarrier of an OFDM system.}. Due to high path-loss exponent in the mm-Wave frequencies, a high beamforming gain at the base-station is necessary in order to deliver/receive sufficient signal power to/from the users \cite{pi2011introduction, rappaport2013millimeter}. Traditionally, the design of 
beamforming matrices is based on complete channel information, which is difficult to have in mm-Wave spectrum due to the small {\em signal-to-noise ratio} (SNR) before beamforming. This in turn indicates that a fast and robust acquisition stage plays a crucial role for establishing a reliable communication link to the users.   

Adaptive beamwidth beamforming algorithms with multi-stage codebooks have been recently proposed by which the transmitter and receiver jointly design their beamforming vectors \cite{wang2009beam, chen2011multi, hur2013millimeter, alkhateeb2014channel}. In short, the proposed multi-stage (multi-resolution) beamforming algorithms can be seen as different variants of binary search or bisection algorithm over all possible {\em angle of arrivals} (AoA) of the channel between the transmitter and the receiver. The search typically starts with two wide beams each scanning half the range of possible AoAs. This provides a rough estimate of the location of channel's dominant AoAs in either the first of the second half. The resulting estimate is further refined by more and more narrow beams until a good estimate of the location of AoAs is found. The main issue of all these search algorithms is that they look for the dominant AoAs rather than the signal subspace (the subspace containing a significant amount of signal power). In particular, the binary search applied by such algorithms becomes less efficient when there are several significant AoAs rather than one. 

In this paper, we mainly focus on single-antenna users and a multi-antenna base station. We formulate the acquisition problem for each user in terms of estimating the user's dominant signal subspace rather than searching  over the AoAs of its communication channel to the base-station. Further, we use the theory of optimal experiment design in statistics to develop an adaptive beamforming and subspace estimation algorithm. We assume that a stream of independent training samples (observations) of each user's channel vector is available at the base station via sequential training. For every new training sample, our algorithm updates its estimate of the user's signal subspace by exploiting all the available measurements. The resulting estimate of the subspace is in turn used to find an optimal beamforming vector for taking the next measurement in the next training snapshot. For simplicity we focus on a 1-dim beamformer that exploits only one RF chain for sampling the whole $M$-dim array signal. We cast the optimal design of this beamforming vector, and subspace estimation as low-complexity optimization problems that can be efficiently solved. We also provide numerical simulations to asses the performance of our proposed algorithm.

\subsection{Notations} \label{sec:notation}
%
We denote vectors by boldface small letters (e.g., $\xv$), matrices by boldface capital letters (e.g., $\Xm$), scalar constant by 
non-boldface letters (e.g., $x$ or $X$), and sets by calligraphic letters (e.g., $\Xc$). The $i$-th element of a vector $\xv$
and the $(i,j)$-th element of a matrix $\Xm$ will be denoted by $[\xv]_i$ and $[\Xm]_{i,j}$ respectively. 
Throughout the paper, the output of an optimization algorithm $\argmin_{x}  f(x)$ is denoted by $x^*$. 
We denote the Hermitian and the transpose of a matrix  $\bfX$ by $\bfX^\herm$ and $\Xm^\transp$, respectively. 
The same notation is used for vectors and scalars. 
We use $\bT_+$ for the space of Hermitian semi-definite Toeplitz matrices. For an $\bfx \in \bC^M$, we denote by $\bT(\bfx)$ a Hermitian Toeplitz matrix whose first column is $\bfx$.
We always use $\Id$ for the identity matrix, where the dimensions may be explicitly indicated for the sake of clarity (e.g., $\Id_p$ denotes
the $p \times p$ identity matrix).  
For an integer $k \in \ZZ$, we use the shorthand notation $[k]$ 
for the set of non-negative integers $\{0,1,2, \dots, k-1\}$, where the set is empty if $k < 0$. 

\section{Preliminaries}\label{sec:rel_work}
In this section, we briefly overview the theory of optimal experiment design in statistics and highlight its connection 
with the channel estimation problem that we address in this paper. In the traditional statistical analysis, the statistician deals with a collection of observations $X$, belonging to some observation space $\clX$, whose distribution is governed by the state of the nature $\theta \in \Theta$. Although the state $\theta \in \Theta$ is not known to the statistician, it is known that it can be described 
by a family of probability distributions $\{\bP_\theta\}_{\theta \in \Theta}$ over $\clX$. In particular, it is always assumed that, the family 
$\{\bP_\theta\}_{\theta \in \Theta}$ is fully known to the statistician. 
When the nature choses the state $\theta$, the role of the statistician is to infer its value from the observation $X$ (or to estimate some sufficient statistics $g : \clX \rightarrow \bR$ thereof). This is essentially the classical decision theoretic definition of a statistical experiment $(\Theta, \{\bP_\theta\}_{\theta \in \Theta}, \clX)$, which can be found in textbooks such as \cite{lehmann1998theory, lehmann2006testing, berger2013statistical}.

A richer and at the same time more challenging problem is the {\em optimal experiment design}. In this case, the distribution of the data $X \in \clX$ depends not only on the state of the nature $\theta$ but also on a parameter $\beta \in \clB$ that can be controlled by the statistician. The idea is that by suitably manipulating
$\beta$, the statistician is able to take more informative measurements of the parameter $\theta$. This is realized via an adaptive or a nonadaptive sampling scheme. 
In a nonadaptive scheme, a collection of design parameters $\{\beta_t\}_{t=1}^T$ for the whole duration of the experiment $T$ is a priori selected. 
In an adaptive procedure, on the other hand, having the collection of observations $\{X_1,X_2, \dots, X_{t-1}\}$, the statistician selects the parameter $\beta_t$ for time $t$ to take the next random sample $X_t\in \clX$ according to the distribution $\bP_{\theta, \beta_t}$. The difference with the nonadaptive case is that now the design parameter $\beta_t \in \clB$ can be measurable function of the whole information $\clF_{t-1}=\sigma(X_1, X_2, \dots, X_{t-1})$ causally available at time $t-1$. Usually the goal is to minimize the number of samples for a given  performance guarantee or to improve the performance for a fixed number of samples. 

The simplest and the well-analyzed case of the experiment design is the linear case in which 
\begin{align}\label{lin_resp}
\bE_{\theta, \beta}[X]=f(\beta)^\transp \theta, \ \var_{\theta, \beta}[X]=\sigma^2,
\end{align}
where it is assumed that $\theta \in \Theta \subset \bR^n$ and the observation mean is a linear function of $\theta$ parametrized with a possibly nonlinear function $f:\clB \to \bR^n$ of the design parameter $\beta \in \clB$. It is also assumed the variance of the observations $\sigma^2$ is a constant independent of $\theta$. For a given choice of the design parameters $\{\beta_t\}_{t=1}^T$, it is not difficult to check that the covariance of the {\em best linear unbiased estimator} (BLUE) of $\theta$ given the observations $\{X_t\}_{t=1}^T$, with $X_t \sim \bP_{\theta, \beta_t}$, is given by the {\em positive semi-definite} (PSD) matrix $\sigma^2 (\bfF \bfF^\transp)^{-1}$ where $\bfF=[f(\beta_1), \dots, f(\beta_T)]$ is an $n\times T$ matrix consisting of the design vectors $f(\beta_t)$, $t\in[T]$. 
There are different optimality criteria that have been studied in the literature such as D-optimality and A-optimality (cf. \cite{fedorov1972theory, fedorov2013optimal} and the references therein). For example, if one is interested in finding the optimal design minimizing $\sum_{i=1}^n \var(\widehat{\theta}_i)$, this is equivalent to minimizing $\tr[(\bfF \bfF^\transp)^{-1}]$, which is an example of A-optimal design obtained by minimizing a cost function $\Psi_\bfA(\bfF)=\tr[\bfA (\bfF \bfF^\transp)^{-1}]$, parametrized by a suitable matrix $\bfA$, e.g. $\bfA=\bfI$ in this case. In a D-optimal design, on the other hand, one is interested in minimizing  $\Psi(\bfF)=\sdet[(\bfF \bfF^\transp)^{-1}]$. The intuition is that, if the estimation error turns out to be asymptotically Gaussian, then $\sdet[(\bfF \bfF^\transp)^{-1}]$ gives the volume of the error ellipsoid around the true parameter $\theta$ obtained via the level sets of the Gaussian probability distribution, thus, it is a reasonable metric to be minimized. It has been shown that both criteria for the optimal design can be cast as convex optimization problems that can be efficiently solved \cite{fedorov1972theory, fedorov2013optimal}. 

The linear regime given by \eqref{lin_resp}, is probably the simplest case of experiment design in which the optimal experiment for a linear estimator can be designed nonadaptively before running the experiment. In particular, the optimal design does not need the knowledge of the true parameter $\theta$, mainly because the variance term $\var_{\theta, \beta}(X)$ does not depend on $\theta$. In practice, and in particular in the problem that we address in this paper, the second oder statistics depends on $\theta$, and typically some extra information about the distribution of $X$ beyond the first and second order statistics is available that can be further exploited. This implies that an optimal design is not in general possible in the nonadaptive case. One typically needs a sequential procedure in which an estimate $\widehat{\theta}_t$ of $\theta$ is obtained via the observed data $\{X_\ell\}_{\ell=1}^t$. This new estimate is in turn used to find an optimal design $\beta^*_{t+1}$ for sampling the next observation $X_{t+1}$. In this paper, we apply such a sequential procedure to estimate the signal subspace of each user, where in each step we apply a sequential D-optimal design to find the optimal beamforming vector for taking the next observation.

\section{Statement of the Problem}
\subsection{Signal Model}\label{sec:sig_model}
We consider a simple propagation model for the wireless scattering channel in which the transmission between a single-antenna user and the multiple-antenna base-station occurs through 
$p$ scatterers
(see Fig.~\ref{fig:sc_channel}). Due to the sparse scattering in mm-Wave spectrum and the large antenna size $M$ used in massive MIMO, we typically have $p\ll M$ \cite{adhikary2014joint}. We assume that the base-station is equipped with a {\em Uniform Linear Array} (ULA), with spacing $d=\frac{\lambda}{2 \sin(\theta_m)}$ between its elements, and scans the angular range $[-\theta_m, \theta_m]$ for some $\theta_m \in (0,\pi/2)$. 
\begin{figure}[h]
\centering
\begin{tikzpicture}[scale=0.55, every node/.style={scale=0.8}]
\tikzstyle{st_axis}=[thick,dashed,->, blue];
\tikzstyle{st_array}=[line width=2, rotate=45];
\tikzstyle{st_arr_elem}=[line width=1, red];

\draw[red] (0,0) -- (1.5,0);

\draw[thick] (0,-2.4) -- (0,2.4);
\draw[thick, ->] (0,2.4) -- (-0.1,2.4) -- (0.1, 2.6) -- (0,2.6) -- (0,3.5);

\draw (0,-2) node{$\bullet$} node[left] {$0$};
\draw (0,-1) node{$\bullet$} node[left] {$d$};
\draw (0,0) node{$\bullet$} node[left] {$2d$};
\node (BS) at (0,0) {$$};
\draw (0,1) node{$\bullet$} node[left] {$3d$};
\draw (0,2) node{$\bullet$} node[left] {$4d$};
\draw (0,3) node{$\bullet$} node[left] {$(M-1)d$};

\draw[red, thick] (8,-0.25) -- (8,0.25) node[black, above=0.5cm] {User};
\node (U) at (8,0) {$$};
\draw[red, thick] (8,0.25) -- +(45:0.3);
\draw[red, thick] (8,0.25) -- +(135:0.3);

\node[draw, circle, shade] (A) at (4,3) {$$};
\node[draw, circle, shade] (B) at (3.5,1) {$$};
\node[draw, circle, shade] (C) at (4.5,-1) {$$};
\node[draw, circle, shade] (D) at (4,-3) {$$};
\node at (4,0) {$\vdots$};
\draw[very thick, brown, dashed] (4,0) ellipse (1 and 3.5);

\foreach \x in {A,B,C,D}{
	\draw[->] (U) -- (\x) -- (BS);
}

\draw (A) node[above=0.5cm] {Scattering Channel};
\draw[blue, thick,->] (0:1) arc(0:38:1)  node[above] {$\theta_i$};
\end{tikzpicture}
\caption{{\footnotesize Scattering channel with discrete angle of arrivals.}}
\label{fig:sc_channel}
\end{figure}
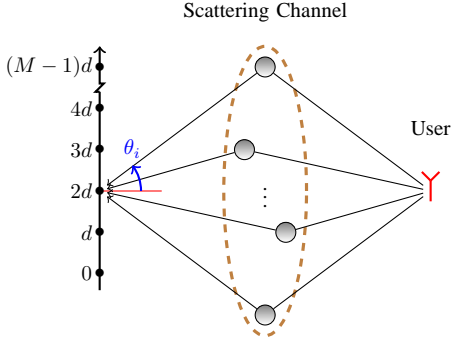
Without loss of generality, we focus on a single user. One snapshot of the received signal is given by
\begin{equation}
\bfy=\sum_{\ell=1}^p \bfa(\theta_\ell) w_\ell\,  x + \bfn, 
\end{equation}
where $x$ is the transmitted (training) symbol from the user, $w_\ell \sim \cg(0, \sigma_\ell^2)$ is the channel gain 
of the $\ell$-th multipath component, $\bfn \sim \cg(0, \sigma^2 \Id_M)$ is the additive white Gaussian noise of the receiver antenna, and where $\bfa(\theta) \in \bC^M$ is the array response at AoA $\theta$, whose $k$-th component is given by 
\begin{align}
[\bfa(\theta)]_k= \exp(j k\frac{2\pi d \sin(\theta)}{\lambda} k)=\exp(j k\pi \frac{\sin(\theta)}{\sin(\theta_m)}).
\end{align}
According to the WSSUS model, the channel gains for different paths $\{w_\ell\}_{\ell=1}^p$ 
are uncorrelated, and since they are (jointly) Gaussian, they are mutually statistically independent and independent of the receiver noise $\bfn$. 
Without loss of generality, we suppose $x=1$ in all training snapshots. Letting 
$\bfA = [\bfa(\theta_1), \bfa(\theta_2), \dots,  \bfa(\theta_p)]$, we have 
\begin{align}\label{eq:disc_ch_mod}
\bfy [t]=\bfA \bfw [t] + \bfn [t], \;\;\; t\in[T],
\end{align}
where $\bfw [t] = (w_1 [t], w_2 [t], \dots, w_p [t])^\transp$ for different $t$ are statistically independent. Also,  we assume that the AoAs $\{\theta_\ell\}_{\ell=1}^p$ remain invariant over a training period of length $T$.
%
For a more practical scenario, we consider the following continuum model 
\begin{align}\label{cont_sig_model}
\bfy [t]=\int_{-1}^1  \sqrt{\gamma(u)} \bfa(u) z(u, t) du+ \bfn [t], \ t \in [T],
\end{align}
where $z(u,t)$ is a circularly symmetric Gaussian process with the covariance
$
\bE\Big \{z(u,t) z(u',t')^* \Big \}= \delta(u - u') \delta_{t,t'}
$, and where $\gamma(u)$ is a positive measure that models the distribution of the received signal's power over $u\in [-1,1)$, where $u=\frac{\sin(\theta)}{\sin(\theta_m)}$ for $\theta \in [-\theta_m, \theta_m]$. With some abuse of notation, we denote the array vector in the $u$ domain by $\bfa(u)$ where $[\bfa(u)]_k=\exp(jk\pi u)$. It is not difficult to check that the covariance matrix of $\bfy [t]$ is given by 
\begin{align}
\bfC_y=\bfC(\gamma):=\int_{-1}^1 \gamma(u) \bfa(u) \bfa(u)^\herm du + \sigma^2 \bfI_M,
\end{align}
 where $\bfS(\gamma):=\int_{-1}^1 \gamma(u) \bfa(u) \bfa(u)^\herm du$ denotes the covariance of the signal part, and where $\sigma^2 \bfI_M$ is the covariance matrix of the white noise of the array. Note that the first column of $\bfS(\gamma)$ is given by $\inp{\gamma}{\bfa(u)}=\int_{-1} ^1 \gamma(u) \bfa(u) du$, whose $k$-th component 
\begin{align}
\Big[\inp{\gamma}{\bfa(u)}\Big]_k=\int_{-1} ^1 \gamma(u) \exp(j k\pi u) du,
\end{align}
 gives the $k$-th Fourier coefficient of the measure $\gamma(u)$. We denote these coefficients with an $M$-dim vector $\bff$, where $[\bff]_k=[\inp{\gamma}{\bfa(u)}]_k$ gives the $k$-th Fourier coefficient of $\gamma$. Also, note that for the uniform linear array (ULA), $\bfS(\gamma)$ is a Hermitian Toeplitz matrix with $[\bfS(\gamma)]_{r,c}=\bff_{r-c}$ for $r,c \in \{1,2,\dots, M\}$ with $r\geq c$. We also define the acquisition \textit{signal-to-noise ratio} (SNR) by $\snr=\frac{\tr(\bfS(\gamma))}{\tr(\sigma^2 \bfI_M)}=\frac{\bff_0}{\sigma^2}$.

\subsection{Performance Metric}
Let $\bfC_y = \bfU_y \Lambdam \bfU_y^\herm$ be the {\em singular value decomposition} (SVD) of $\bfC_y$, where $\Lambdam =\diag(\lambda_1, \lambda_2, \dots, \lambda_M)$ denotes the diagonal matrix of singular values. We always assume that the singular values are sorted in a non-increasing order. 
For a given $p$, let us denote 
by $\bfU_{y,p}$ the $M\times p$ matrix consisting of the first $p$ columns of $\bfU_y$. Note that since the array noise is white, $\bfU_{y,p}$ can be interpreted as the best $p$-dim beamforming matrix that captures the highest amount of signal power. More precisely, 
\begin{align}\label{dom_subs}
\bfU_{y,p}=\argmax _{\bfU \in \bH(M,p)} \inp{\bfC_y}{\bfU\bfU^\herm},
\end{align}
where $\bH(M,p)$ denotes the space of all tall $M\times p$ matrices $\bfU$ with $\bfU^\herm \bfU=\bfI_p$. We asses the efficiency of the best $p$-dim beamformer by $\eta_p=\frac{\inp{\bfC_y}{\bfU_{y,p}\bfU_{y,p}^\herm}}{\tr(\bfC_y)}$. For a mm-Wave channel, $\eta_p \approx 1$ for a typically very small $p \ll M$ since a significant amount of signal's power is concentrated in a very low-dimensional subspace. Our goal in this paper is to reliably estimate the best $p$-dim beamformer $\bfU_{y,p}$ via an adaptive beamforming. Let $\widehat{\bfU}_p$ be such an estimate. We asses the efficiency of our estimator by $\Gamma_p=\frac{\inp{\bfC_y}{\widehat{\bfU}_p\widehat{\bfU}_p^\herm}}{\inp{\bfC_y}{\bfU_{y,p}\bfU_{y,p}^\herm}}$. Note that $\Gamma_p \in [0,1]$, where $\Gamma_p \approx 1$ implies that the estimate $\widehat{\bfU}_p$ is as good as the optimal beamformer $\bfU_{y,p}$.

\subsection{Signal Acquisition via Adaptive Beamforming}
At the start of the training period, the covariance matrix of the user $\bfC_y$ is unknown, thus, an acquisition step is necessary to estimate this matrix. Once we have an good estimate of $\bfC_y$, we can use its $p$-dim subspace $\bfU_{y,p}$ given by \eqref{dom_subs} for beamforming. We assume that we have access to the noisy output of the array $\bfy [t]$ for a training period of length $T$. In this paper, we focus on a single-RF-chain acquisition in which in each snapshot $t \in [T]$, we sample $\bfy[t]$ via an individual beamforming vector $\bfv [t]$ with $\|\bfv [t]\|=1$. This can be implemented via a single analog RF chain. We consider a noncoherent measurement given by\begin{align}\label{eq:noncoh_meas}
r [t]=\big |\inp{\bfv [t]}{\bfy [t]}\big |^2=\big |\bfv [t]^\herm \bfy [t]\big |^2.
\end{align} 
This measurement model is more robust than coherent measurements since during acquisition, the input receiver is not probably fully synchronized.
The beamforming vector $\bfv[t]$ at snapshot $t$ can be adaptively selected based on all the previous observations $r[\ell]$, $\ell=1,\dots, t-1$. 
From the signal model \eqref{cont_sig_model}, it is seen that $\inp{\bfv [t]}{\bfy [t]}$ is a complex Gaussian variable with a variance 
\begin{align}
\bfv [t]^\herm \bfC(\gamma) \bfv [t]=\sigma^2 + \bfv [t]^\herm \bfS(\gamma) \bfv [t],
\end{align}
which implies that $r [t]$ is an exponential random variable with a mean $\mu(\bfv [t])$ where $\mu: \bC^M \to \bR_+$ is given by $\mu(\bfv):=\sigma^2+ \bfv^\herm \bfS \bfv$, where for simplicity we drop the dependence of $\bfS(\gamma)$ on the power density $\gamma$. The variance of $r[t]$ is also given by $\kappa(\bfv[t])=\mu(\bfv[t])^2$. We always assume that $\bfy [t]$, and as a result $r[t]$, are independent across different snapshots $t\in[T]$, where the probability distribution of $r[t]$ is given by 
\begin{align}\label{exp_dist}
p(r)=\frac{1}{\mu(\bfv[t])}\exp\{-\frac{r}{\mu(\bfv[t])}\} \ind_{\{r\in \bR_+\}}.
\end{align}
For a fast and reliable acquisition, we need to find an adaptive procedure for designing the beamforming vectors $\bfv[t]$, $t \in [T]$ adaptively. The hope is that by using the previous observations $r[1], r[2], \dots, r[t-1]$, one can take more informative measurements from the signal covariance matrix.


\section{Problem Formulation}
\subsection{Basic setup}
We define the autocorrelation operator $\frka: \bC^M \to \bC^M$ that assigns to each vector $\bfv \in \bC^M$ the vector $\frka_\bfv\in \bC^M$ whose components are given by $[\frka_\bfv]_0=\|\bfv\|^2$ and 
\begin{align}\label{autocor}
[\frka_\bfv]_k=2\sum_{i=0}^{M-1-k} [\bfv]_{i+k} [\bfv]_i^\herm, \ k \in [M], k\geq 1.
\end{align}
Note that the first component of $\frka_\bfv$ is always real-valued.
We define the complex embedding $\frkc: \bR^{2M-1} \to \bC^M$ that maps
a vector $\bfr\in \bR^{2M-1}$ into the complex vector $\frkc_\bfr$ as follows:
\begin{align}
[\frkc_\bfr]_k&=[\bfr]_k, \ k=0,\\
[\frkc_\bfr]_k&=[\bfr]_k + j [\bfr]_{M-1+k},\ k\in [M], k\geq 1,
\end{align}
which can be also written with the shorthand notation
\begin{align}
\frkc_\bfr=\bfr[\range{0}{M}] + j \Big [0; \bfr[\range{M}{2M}]\Big].
\end{align}
Note that $[\frkc_\bfr]_0$ is still real-valued. We define the inverse map of $\frkc$ by $\frkr: \bC^M \to \bR^{2M-1}$, where for a $\bfc \in \bC^M$ with $[\bfc]_0 \in \bR$ 
\begin{align}
\frkr_\bfc=\Big [ \Re\big [ \bfc[\range{0}{M}] \big ] ; \Im\big [ \bfc[\range{M+1}{2M}]\big ] \Big ].
\end{align}
With the notation already defined, is not difficult to check that for any two real-valued signals $\bfu, \bfv \in \bR^{2M-1}$, we have
\begin{align}
\inpr{\frkc_\bfu}{\frkc_\bfv}:=\Re[\frkc_\bfu^\herm\, \frkc_\bfv]=\inp{\bfu}{\bfv}.
\end{align}
Now consider the exponential probability distribution in \eqref{exp_dist}, whose mean at time $t\in[T]$ is given by $\mu(\bfv[t])=\sigma^2 + \bfv[t]^\herm \bfS \bfv[t]$, where $\bfS$ is the covariance of the signal part. $\bfS$ is a Hermitian Toeplitz matrix with diagonal elements $\bff$ as defined in Section \ref{sec:sig_model}, i.e., $\bfS=\bT(\bff)$.
%
%
Using \eqref{autocor}, it is not difficult to check that, for every $\bfv, \bff \in \bC^M$ with $[\bff]_0 \in \bR$, 
\begin{align}\label{eq:con_relation}
\bfv^\herm \bT(\bff) \bfv= \inpr{\frka_\bfv}{\bff}=\inp{\frkr_{\frka_\bfv}}{\frkr_{\bff}}.
\end{align}
Also note that since $[\frka_\bfv]_0=\|\bfv\|^2$, and $[\bff]_0$ are real-valued, $\frkr_{\frka_\bfv}$ and $\frkr_\bff$ are well-defined. Moreover,
\begin{align}\label{eq:mu_real}
\mu(\bfv[t])= \sigma^2 + \bfv[t]^\herm \bT(\bff) \bfv[t]= \sigma^2 + \inp{\frkr_{\frka_\bfv[t]}}{\frkr_\bff}.
\end{align}  
From \eqref{exp_dist}, the Fischer Information Matrix (FIM) of the parameter $\bff$ (or equivalently $\frkr_\bff$) is given by
\begin{align}\label{FIM_t}
\frkF[t]&=\bE\big [\frac{\partial \log(p(r))}{\partial \frkr_\bff}  \frac{\partial \log(p(r))}{\partial \frkr_\bff}^\transp\big ]\\
&= {\frkr}_{\frka_\bfv[t]}{\frkr}_{\frka_\bfv[t]}^\transp \bE \Big[ \big (\frac{r}{\mu(\bfv[t])^2} - \frac{1}{\mu(\bfv[t])} \big )^2 \Big ]\\
&= {\frkr}_{\frka_\bfv[t]}{\frkr}_{\frka_\bfv[t]}^\transp \var[\frac{r}{\mu(\bfv[t])^2}]\\
&=\frac{{\frkr}_{\frka_\bfv[t]}{\frkr}_{\frka_\bfv[t]}^\transp}{\big \{\sigma^2+\inp{\frkr_{\frka_\bfv[t]}}{\frkr_\bff}\big \}^2} .
\end{align}
where we used $\var[r]=\kappa(\bfv[t])=\mu(\bfv[t])^2$, and the $\mu(\bfv[t])$ given by \eqref{eq:mu_real}.
Since the observations $r[t]$ for different $t \in [T]$ are independent of each other, the FIM is additive, i.e.,
\begin{align}
\frkF[\range{1}{t}]=\sum_{i=1}^t \frkF[i].
\end{align}
We also define $\frkD[\range{1}{t}]:=\frkF[\range{1}{t}]^{-1}$. From the Cramer-Rao lower bound, the covariance matrix of any unbiased estimator $\widehat{\frkr_\bff}$ of the parameter $\frkr_\bff$ using the observations $r[\range{1}{t}]$ satisfies 
$\cov(\widehat{\frkr_\bff}) \succeq \frkD[\range{1}{t}]$. 

\subsection{Design Criterion}
For the moment, suppose that the parameter $\bff$, and as a result the signal covariance matrix $\bT(\bff)$ is known. Note that from \eqref{FIM_t}, the FIM is PSD, which implies that $\frkD[\range{1}{t}] \succeq \frkD[\range{1}{t+1}]$, thus, Cramer-Rao lower bound always improves by adding new measurements. 
Let $\{\bfv[i]\}_{i=1}^t$ be the set of all beamforming vectors used until time $t$, and let
\begin{align}
\xi_t:=\frac{1}{t} \sum_{i=1}^t \delta[\bfw-\frkr_{\frka_\bfv[i]}],
\end{align}
where for a vector $\bfw_0\in \bR^{2M-1}$, we denote by $\delta[\bfw-\bfw_0]$ a delta-measure at point $\bfw_0$. Using $\xi_t$, we can write 
\begin{align}\label{fim_meas}
\frkF[\range{1}{t}]=t \int \frac{\bfw \bfw^\transp}{\big \{\sigma^2 + \inp{\bfw}{\frkr_\bff}\big \}^2}\, \xi_t(d\bfw),
\end{align}
which is a linear function of the design measure $\xi_t$. As we explained in Section \ref{sec:rel_work}, there are different optimality criterion that have been proposed. In this paper, we focus on a D-optimal design. For a nonadaptive design with at most $N$ samples, this requires finding a discrete measure with a discrete support of size at most $N$ that minimizes $\sdet(\frkD[\range{1}{N}])$. Relaxing the discreteness condition, this is a concave maximization that can be efficiently solved \cite{fedorov1972theory, fedorov2013optimal}.

In an adaptive design, we update the measure $\xi_{t-1}$ at time $t$ by adding a new design vector $\bfv[t]$. More precisely, 
\begin{align}
\xi_t= (1-\frac{1}{t}) \xi_{t-1}+ \frac{1}{t} \delta[\bfw - \frkr_{\frka_\bfv[t]}].
\end{align}
Hence, the FIM is updated as follows 
\begin{align}
\frkF[\range{1}{t}]=(1-\frac{1}{t}) \frkF[\range{1}{t-1}] + \frac{1}{t} \frkF[t].
\end{align}
For a D-optimal design, we minimize $\log(\sdet(\frkD[\range{1}{t}]))$. For sufficiently large $t$, the term $\frac{1}{t} \frkF[t]$ is a small rank-1 update of $\frkF[\range{1}{t-1}]$. Hence, we have
\begin{align}
\log(\sdet(\frkF[\range{1}{t}]))&= \log\Big(\sdet\big((1-1/t)\frkF[\range{1}{t-1}]\big)\Big)\\
 &+ \frac{1}{t-1} \frac{\frkr_{\frka_\bfv[t]}^\transp \frkD[\range{1}{t-1}] \frkr_{\frka_\bfv[t]}}{\big \{\sigma^2 + \inp{\frkr_{\frka_\bfv[t]}}{\frkr_\bff}\big \}^2} + o(\frac{1}{t})\nonumber.
\end{align}
This implies that for sufficiently large $t$, the best D-optimal design for $\bfv[t]$ at stage $t$ is given by
\begin{align}\label{opt_beamf}
\bfv^*[t]=\argmax_{\bfv \in \bC^M, \|\bfv\|=1} \frac{\frkr_{\frka_\bfv}^\transp \frkD[\range{1}{t-1}] \frkr_{\frka_\bfv}}{\big \{\sigma^2 + \inp{\frkr_{\frka_\bfv}}{\frkr_\bff}\big \}^2}
\end{align}
In practice, the parameter $\bff$ is unknown and we need to estimate it as well. Hence, we need to apply an iterative procedure, which has been summarized in {\bf Algorithm} \ref{adap_beam}.

\begin{algorithm}[h!]
\caption{Adaptive Beamforming and Subspace Estimation}
\label{adap_beam}
\begin{algorithmic}[1]
\State {\color{red}{\bf Initial Sampling:}}
\For{$t=1,\dots, 2M-1$}
\State Choose a random beamforming vector $\bfv[t]$.
\State Take a noncoherent measurement $r[t]=|\inp{\bfv[t]}{\bfy[t]}|^2$.
\EndFor
\State $t \gets 2M$
\State {\color{red}{\bf Sequential Acquisition and Subspace Estimation:}}
\While{allowed to take further measurements}
\State Find the ML estimate ${\bff}^*[t]$ using the data $r[\range{1}{t-1}]$. 
\State Compute $\frkF[\range{1}{t-1}]$ and $\frkD[\range{1}{t-1}]$ for $\bff=\bff^*[t]$.
\State \begin{varwidth}[t]{\linewidth}
Find the optimal beamforming vector $\bfv^*[t]$ via \eqref{opt_beamf}\\ for $\bff=\bff^*[t]$.
\end{varwidth}
\State Take a new sample $r[t]=|\inp{\bfv^*[t]}{\bfy[t]}|^2$ using $\bfv^*[t]$.
\State Update the observations $r[\range{1}{t}]=r[\range{1}{t-1}] \cup \{r[t]\}$.
\State $t \gets t+1$.
\EndWhile
\end{algorithmic}
\end{algorithm}

\section{Analysis and Proof Techniques}
\subsection{Maximum Likelihood Estimate of $\bff$}\label{sec:ML}
Using the independence of $r[\range{1}{t}]$, from \eqref{exp_dist} we can write their probability distribution as follows
\begin{align}
p(r[\range{1}{t}])=\prod_{\ell=1}^t \frac{1}{\mu(\bfv[\ell])}\exp\{-\frac{r[\ell]}{\mu(\bfv[\ell])}\} \ind_{\{r[\ell] \in \bR_+\}}.
\end{align}
The ML estimator is found by minimizing the cost function
\begin{align*}
L_t(\bff)=\sum_{\ell=1}^t \log(\sigma^2 + \inp{\frkr_{\frka_\bfv[\ell]}}{\frkr_\bff}) + \sum_{\ell=1}^t \frac{r[\ell]}{\sigma^2 + \inp{\frkr_{\frka_\bfv[\ell]}}{\frkr_\bff}}
\end{align*}
subject to the constraint $\bT(\bff) \in \bT_+$. 
It is seen that $L_t(\bff)$ is the sum of a logarithmic concave term and a convex one, thus, it is not in general convex. However, a simple calculation shows that the Hessian of $L_t(\bff)$ is given by 
\begin{align}
\nabla^2 L_t(\bff)= \sum_{\ell=1}^t \frac{\frkr_{\frka_\bfv[\ell]}\frkr_{\frka_\bfv[\ell]}^\transp}{\mu(\bfv[\ell])^2} \big(r[\ell] -1 + \frac{r[\ell]}{\mu(\bfv[\ell])} \big ).
\end{align}
whose expectation is given by 
\begin{align}
\bE\big[\nabla^2 L_t(\bff)\big]= \sum_{\ell=1}^t \frac{\frkr_{\frka_\bfv[\ell]}\frkr_{\frka_\bfv[\ell]}^\transp}{\mu(\bfv[\ell])}=\sum_{\ell=1}^t \frac{\frkr_{\frka_\bfv[\ell]}\frkr_{\frka_\bfv[\ell]}^\transp}{\sigma^2 + \inp{\frkr_{\frka_\bfv[t]}}{\frkr_\bff}},
\end{align}
which is a PSD matrix. This implies that for sufficiently large $t$, we expect that $L_t(\bff)$ be a convex function of $\bff$, which can be efficiently minimized over the convex set $\bT_+$. In this paper, we apply a \textit{concave-convex procedure} (CCCP) first proposed in \cite{yuille2003concave} to sequentially estimate the 
optimal solution $\bff^*[t]$. Let $\{\bff_\ell\}_{\ell=0}^{k}$ be a sequence of estimate solutions obtained via CCCP with the initialization $\bff_0=(1,0,0, \dots, 0)^\transp$. To find the next estimate $\bff_{k+1}$, we first upper bound 
the logarithmic term $\lcav(\bff):=\sum_{\ell=1}^t \log(\sigma^2 + \inp{\frkr_{\frka_\bfv[t]}}{\frkr_\bff})$ in function $L_t(\bff)$ around the latest estimate $\bff_{k}$ by 
the linear term 
\begin{align*}
\ell(\bff;\bff_{k}):= \lcav(\bff_k)+\sum_{\ell=1}^t 
\frac{\inp{\frkr_{\frka_\bfv[t]}}{\frkr_{\bff-\bff_k}}}{\sigma^2+\inp{\frkr_{\frka_\bfv[t]}}{\frkr_{\bff_k}}},
\end{align*}
which is the tightest convex approximation of the concave function $\lcav(\bff)$. In particular, $\ell(\bff_k;\bff_k)=\lcav(\bff_k)$. Let $\Upsilon(\bff;\bff_k)= \sum_{\ell=1}^t \frac{r[\ell]}{\sigma^2 + \inp{\frkr_{\frka_\bfv[t]}}{\frkr_\bff}} + \ell(\bff;\bff_k)$. It is not difficult to check that for every $\bff\in\bT_+$, the function $\Upsilon(\bff;\bff_k)$ is a convex upper bound for $L_t(\bff)$. We obtain the next estimate $\bff_{k+1}$ via the following convex optimization
\begin{align}
\bff_{k+1}:=\argmin_{\{\bff: \bT(\bff) \in \bT_+\}} \Upsilon(\bff;\bff_k).
\end{align}
We can check that $\{\bff_\ell\}_{\ell=1}^\infty$ monotonically improves the likelihood function $L_t(\bff)$ since
\begin{align}
L_t(\bff_{k+1})&\leq \Upsilon(\bff_{k+1};\bff_k)=\min_{\{\bff: \bT(\bff) \in \bT_+\}} \Upsilon(\bff;\bff_k)\\
&\leq \Upsilon(\bff_k;\bff_k)=L_t(\bff_{k}).
\end{align}
In particular, if $L_t(\bff)$ happens to be convex then every limit point of the sequence $\{\bff_\ell\}_{\ell=1}^\infty$ generated by the CCCP will correspond to the globally optimal solution $\bff^*[t]:=\argmin_{\{\bff: \bT(\bff) \in \bT_+\}} L_t(\bff)$. 
%
%

\subsection{Finding the Optimal Beamforming Vector}
For finding the optimal beamforming vector at stage $t$, we need to solve the optimization problem \eqref{opt_beamf}, which can be equivalently written as  
\begin{align}\label{opt_beamf2}
\bfv[t]=\argmax_{\bfv \in \bC^M, \|\bfv\|=1} \frac{\frkr_{\frka_\bfv}^\transp \frkD[\range{1}{t-1}] \frkr_{\frka_\bfv}}{\inp{\frkr_{\frka_\bfv}}{\frkr_{\widehat{\bff}[t] + \sigma^2 \bfe_1}}^2},
\end{align}
where $\bfe_1=(1,0, \dots, 0)^\transp\in \bR^{2M-1}$ is the first element of standard basis in $\bR^{2M-1}$. To simplify the optimization \eqref{opt_beamf2}, we define
\begin{align*}
\clA=\{\bfr\in \bR^{2M-1}: \exists\, \bfv\in \bC^M \text{ such that } \|\bfv\|=1, \bfr=\frkr_{\frka_\bfv}\},
\end{align*}
as the set of autocorrelation sequences. Note that for every $\bfr \in \clA$, we have $[\bfr]_0=1$ and from \eqref{eq:con_relation}
\begin{align}
[\bfr]_0 &+ \sum_{k=1}^{M-1} [\bfr]_k \cos(k \pi u) + [\bfr]_{k+M} \sin(k \pi u)\\
&=\inp{\bfr}{\frkr(\bfa(u))}=\inpr{\frkc_\bfr}{\bfa(u)}= \inpr{\frka_\bfv}{\bfa(u)}\\
&= \bfv^\herm \bT(\bfa(u))\bfv=|\bfv^\herm \bfa(u)|^2\geq 0,
\end{align}
where $\bfv$ is a vector whose autocorrelation is $\bfr$. It is seen that for every $\bfr\in \clA$ the trigonometric function $\inpr{\frkc_\bfr}{\bfa(u)}$ is positive in the whole interval $u \in [-1,1]$. In fact, from the Riesz-Fej\'er spectral factorization the converse also holds \cite{ben2001lectures, dumitrescu2007positive}, i.e., a vector $\bfr \in \clA$ if and only if $[\bfr]_0=1$ and for every $u \in [-1,1]$ the trigonometric function $\inpr{\frkc_\bfr}{\bfa(u)}$ is positive for all $u\in [-1,1]$. This in particular implies that $\clA$ can be written as the intersection of closed half-spaces 
\begin{align}\label{inf_rep}
\clA=\cap_{u \in [-1,1]} \{\bfr: [\bfr]_0=1, \inpr{\frkc_\bfr}{\bfa(u)}\geq 0\},
\end{align}
which implies that $\clA$ is a closed convex set. Moreover, from the properties of the autocorrelation sequence, it results that for every $1\leq k \leq M-1$, we have $|\bfr_k|\leq 2\bfr_0=2$, which implies that $\clA$ is also bounded, thus, it is a compact subset of $\bR^{2M-1}$. Hence, we can write the optimization \eqref{opt_beamf2} in the following equivalent form:
\begin{align}\label{opt_beamf3}
\bfr^*=\argmax_{\bfr \in \clA} \frac{\bfr^\transp \bfD \bfr}{\inp{\bfr}{\hat{\bfr}}^2},
\end{align}
where $\bfD:=\frkD[\range{1}{t-1}]$ and $\hat{\bfr}:=\frkr_{\widehat{\bff}[t] + \sigma^2 \bfe_1}$. Let $\bfr \in \clA$ be the autocorrelation of some $\bfv \in \bC^M$, i.e., $\bfr=\frkr_{\frka_\bfv}$ with $\|\bfv\|=1$. From \eqref{eq:con_relation}, we have 
\begin{align}
\inp{\bfr}{\hat{\bfr}} &=\inp{\frkr_{\frka_\bfv}}{\frkr_{\widehat{\bff}[t] + \sigma^2 \bfe_1}}=\inpr{\frka_\bfv}{\widehat{\bff}[t] + \sigma^2 \bfe_1}\\
&=\bfv^\herm \bT(\widehat{\bff}[t] + \sigma^2 \bfe_1)\bfv \geq \sigma^2 \|\bfv\|^2=\sigma^2>0,
\end{align}
thus, the cost function in  \eqref{opt_beamf3} is well-defined. Moreover, since $\clA$ is compact the maximum is also achieved.
The cost function in \eqref{opt_beamf3} is the ratio of two convex quadratic functions, which can be optimized by  a bisection procedure first proposed by Dinkelbach \cite{dinkelbach1967nonlinear}. Let $\lambda \in \bR_+$ and define 
\begin{align}
Q_\lambda(\bfr)&=\bfr^\transp \bfD \bfr-\lambda \inp{\bfr}{\hat{\bfr}}^2=\bfr^\transp \bfD_\lambda \bfr,
\end{align}
where $\bfD_\lambda :=\bfD -\lambda \hat{\bfr}\hat{\bfr}^\transp$. Let $\pi(\lambda)=\max_{\bfr \in \clA} Q_\lambda(\bfr)$. Note that since $\clA$ is compact the maximum is always achieved for some $\bfr^*(\lambda) \in \clA$. We have the following proposition.
\begin{proposition}\label{pi_sol}
Let $\lambda \in \bR_+$ and let $\pi(\lambda)$ be as defined before. Then, $\pi(\lambda)$ is a decreasing convex function of $\lambda$. Moreover, there is some $\lambda^* \in \bR_+$ such that $\pi(\lambda^*)=0$. 
\end{proposition}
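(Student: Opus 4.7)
The plan is to read off all three claims directly from the envelope representation $\pi(\lambda)=\sup_{\bfr\in\clA} Q_\lambda(\bfr)$, combined with two facts already established above the statement: the compactness of $\clA$ (from the Riesz--Fej\'er characterization \eqref{inf_rep} together with the coordinate bound $|\bfr_k|\leq 2$), and the uniform positivity
\begin{align*}
\inp{\bfr}{\hat{\bfr}} \,=\, \bfv^\herm \bT(\widehat{\bff}[t]+\sigma^2\bfe_1)\bfv \,\geq\, \sigma^2
\end{align*}
that holds for every $\bfr=\frkr_{\frka_\bfv}\in\clA$, which is exactly the calculation carried out in the paragraph preceding \eqref{opt_beamf3}.

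For convexity and strict monotonicity I would fix any $\bfr\in\clA$ and observe that $\lambda\mapsto Q_\lambda(\bfr)=\bfr^\transp\bfD\bfr-\lambda\inp{\bfr}{\hat{\bfr}}^2$ is affine in $\lambda$; hence $\pi$ is a pointwise supremum of affine functions and is therefore convex on $\bR_+$. The uniform slope bound $-\inp{\bfr}{\hat{\bfr}}^2 \leq -\sigma^4<0$ gives, for any $\lambda_1<\lambda_2$ and any $\bfr\in\clA$, the inequality $Q_{\lambda_2}(\bfr)\leq Q_{\lambda_1}(\bfr)-\sigma^4(\lambda_2-\lambda_1)$, and taking the supremum over $\clA$ yields $\pi(\lambda_2)\leq \pi(\lambda_1)-\sigma^4(\lambda_2-\lambda_1)<\pi(\lambda_1)$, so $\pi$ is in fact strictly decreasing.

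To produce the zero I would evaluate the two endpoints. At $\lambda=0$, $\pi(0)=\max_{\bfr\in\clA}\bfr^\transp\bfD\bfr\geq 0$ because $\bfD=\frkD[\range{1}{t-1}]$ is PSD as the inverse of a PSD FIM. Setting $C:=\max_{\bfr\in\clA}\bfr^\transp\bfD\bfr<\infty$, which is finite by compactness of $\clA$, the same slope bound also gives $\pi(\lambda)\leq C-\sigma^4\lambda\to -\infty$ as $\lambda\to\infty$. Since $\pi$ is convex and finite on $\bR_+$, it is continuous there, so the intermediate value theorem delivers some $\lambda^*\in\bR_+$ with $\pi(\lambda^*)=0$. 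No step is a genuine obstacle; the only ingredient that needs to be flagged is the uniform lower bound $\inp{\bfr}{\hat{\bfr}}\geq\sigma^2$, which is precisely the same shift by $\sigma^2\bfe_1$ that made the cost in \eqref{opt_beamf3} well defined, so this comes for free from the feasibility constraint $\bT(\widehat{\bff}[t])\succeq 0$ imposed in the ML stage of Section~\ref{sec:ML}.
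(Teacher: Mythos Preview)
Your proof is correct and follows essentially the same route as the paper: convexity from the affine dependence of $Q_\lambda(\bfr)$ on $\lambda$, monotonicity from the sign of $\bfD_{\lambda_1}-\bfD_{\lambda_2}$ (equivalently your slope bound), and the zero from $\pi(0)\geq 0$ together with $\pi(\lambda)\leq \pi(0)-\sigma^4\lambda$ and continuity. The only cosmetic difference is that you invoke the ``supremum of affine functions'' principle for convexity, whereas the paper verifies the defining inequality directly at the maximizer $\bfr_\alpha$; your uniform slope bound also yields strict monotonicity and hence uniqueness of $\lambda^*$, which the paper states but derives slightly less explicitly.
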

\begin{proof}
The decreasing property follows from the definition and the fact that $\bfD_{\lambda_1} \succeq \bfD_{\lambda_2}$ for every $\lambda_2 > \lambda_1 \geq 0$. To prove the convexity, let $\lambda_1, \lambda_2$ be arbitrary numbers in $\bR_+$ and let $\alpha \in (0,1)$ and set $\lambda_\alpha=\alpha \lambda_1 + (1-\alpha) \lambda_2$. Let $\bfr_\alpha=\bfr^*(\lambda_\alpha)=\argmax_{\bfr\in \clA} Q_{\lambda_\alpha}(\bfr)$. Then, we have
\begin{align}
\pi(\lambda_\alpha)&=Q_{\lambda_\alpha}(\bfr_\alpha)= \alpha Q_{\lambda_1}(\bfr_\alpha) + (1-\alpha) Q_{\lambda_2}(\bfr_\alpha)\\
&\leq \alpha \pi(\lambda_1) + (1-\alpha) \pi(\lambda_2),
\end{align}
which proves the convexity of $\pi(\lambda)$. In particular, it results that $\pi(\lambda)$ is continuous in $\bR_+$. 

To prove the last part, let $\pi(0)=\max_{\bfr \in \clA} \bfr^T \bfD \bfr$. Since $\clA$ is compact $\pi(0) \in (0, \infty)$ is a bounded positive number. As $\inp{\bfr}{\hat{\bfr}}\geq \sigma^2$, it results that for every $\bfr \in \clA$, we have $\pi(\lambda) \leq \pi(0) - \lambda \sigma^4$ for all $\lambda \in \bR_+$. This implies that there is a unique  $\lambda^* \in \bR$ such that $\pi(\lambda^*)=0$. 
\end{proof}

The next result characterizes the global maximum of \eqref{opt_beamf3}.

\begin{proposition}\label{lam_sol}
Let $\pi(\lambda)$ be as before and let $\lambda^*$ be the unique point given by Proposition \ref{pi_sol} for which $\pi(\lambda^*)=0$. Then the global maxizer of \eqref{opt_beamf3} is given by $\bfr^*(\lambda^*)$.
\end{proposition}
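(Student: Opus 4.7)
The plan is to apply the standard Dinkelbach reformulation for fractional programs and simply bridge it to the objects already built up. Because $\clA$ is compact and, by the calculation just preceding \eqref{opt_beamf3}, $\inp{\bfr}{\hat{\bfr}} \ge \sigma^2 > 0$ for all $\bfr \in \clA$, the fractional objective in \eqref{opt_beamf3} is continuous on $\clA$ and a global maximizer $\bar{\bfr}\in\clA$ exists. Denote the optimal value by
\begin{align*}
\lambda^\star := \frac{\bar{\bfr}^\transp \bfD \bar{\bfr}}{\inp{\bar{\bfr}}{\hat{\bfr}}^2} \ge 0.
\end{align*}
The goal is to show $\lambda^\star = \lambda^*$ and that $\bfr^*(\lambda^*)$ attains the maximum.

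First I would establish $\pi(\lambda^\star)=0$ by two inequalities. For the upper bound, fix any $\bfr\in\clA$; the defining inequality $\bfr^\transp \bfD \bfr / \inp{\bfr}{\hat{\bfr}}^2 \le \lambda^\star$, together with positivity of the denominator, rearranges to $Q_{\lambda^\star}(\bfr)=\bfr^\transp \bfD \bfr - \lambda^\star \inp{\bfr}{\hat{\bfr}}^2 \le 0$; taking the supremum over $\clA$ gives $\pi(\lambda^\star)\le 0$. For the lower bound, evaluating $Q_{\lambda^\star}$ at $\bar{\bfr}$ gives equality, so $\pi(\lambda^\star)\ge Q_{\lambda^\star}(\bar{\bfr}) = 0$. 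Hence $\pi(\lambda^\star)=0$, and by the uniqueness asserted in Proposition \ref{pi_sol} (which in turn uses the strict decrease at rate at least $\sigma^4$ established there) we conclude $\lambda^\star=\lambda^*$.

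Second, I would recover the maximizer. By definition $\bfr^*(\lambda^*)\in\argmax_{\bfr\in\clA} Q_{\lambda^*}(\bfr)$, so $Q_{\lambda^*}(\bfr^*(\lambda^*)) = \pi(\lambda^*)=0$, i.e.
\begin{align*}
\bfr^*(\lambda^*)^\transp \bfD\, \bfr^*(\lambda^*) = \lambda^* \inp{\bfr^*(\lambda^*)}{\hat{\bfr}}^2.
\end{align*}
Dividing by $\inp{\bfr^*(\lambda^*)}{\hat{\bfr}}^2 \ge \sigma^4 > 0$ yields
\begin{align*}
\frac{\bfr^*(\lambda^*)^\transp \bfD\, \bfr^*(\lambda^*)}{\inp{\bfr^*(\lambda^*)}{\hat{\bfr}}^2} \;=\; \lambda^* \;=\; \lambda^\star,
\end{align*}
so $\bfr^*(\lambda^*)$ is a global maximizer of \eqref{opt_beamf3}, as claimed.

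There is no serious obstacle here; the argument is the textbook Dinkelbach equivalence adapted to the concrete setup. The only points that need care are (a) invoking the strict positivity $\inp{\bfr}{\hat{\bfr}}\ge\sigma^2$ (already proved above via $\hat{\bfr}=\frkr_{\widehat{\bff}[t]+\sigma^2\bfe_1}$ and $\bfv^\herm\bT(\widehat{\bff}[t])\bfv\ge 0$ since $\bT(\widehat{\bff}[t])\in\bT_+$) so that denominators never vanish and the fractional objective is well defined on all of $\clA$, and (b) using the uniqueness of the root of $\pi$ from Proposition \ref{pi_sol} to identify $\lambda^\star$ with $\lambda^*$; this uniqueness also gives the sign information $\pi(\lambda)>0$ for $\lambda<\lambda^*$ and $\pi(\lambda)<0$ for $\lambda>\lambda^*$, which is useful in practice because it certifies a bisection scheme for computing $\lambda^*$.
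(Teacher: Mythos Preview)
Your proof is correct and uses the same Dinkelbach equivalence as the paper; the only cosmetic difference is that the paper argues directly from $\lambda^*$ (showing $Q_{\lambda^*}(\bfr)\le \pi(\lambda^*)=0$ for all $\bfr\in\clA$, hence $\bfr^\transp\bfD\bfr/\inp{\bfr}{\hat{\bfr}}^2\le\lambda^*$ with equality at $\bfr^*(\lambda^*)$), whereas you first introduce the optimal fractional value $\lambda^\star$, show $\pi(\lambda^\star)=0$, and then invoke uniqueness to identify $\lambda^\star=\lambda^*$. Both routes are equivalent and standard.
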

\begin{proof}
From the definition of $\pi(\lambda)$, we have
\begin{align}
\bfr^\transp \bfD_{\lambda^*} \bfr \leq \pi(\lambda^*)=0,
\end{align}
for every $\bfr \in \clA$. In particular, $\bfr^\transp \bfD \bfr \leq \lambda^* \inp{\bfr}{\hat{\bfr}}^2$ or  $\frac{\bfr^\transp \bfD \bfr}{\inp{\bfr}{\hat{\bfr}}^2} \leq \lambda^*$, where the equality is achieved for $\bfr=\bfr^*(\lambda^*)$. This implies that $\bfr^*(\lambda^*)$ is the global maximizer of \eqref{opt_beamf3} over $\clA$ with a maximum value $\lambda^*$.
\end{proof}

\begin{figure*}[t!]
\centering
\begin{tikzpicture}
\pgfmathsetmacro{\scale}{0.68}
\pgfmathsetmacro{\shift}{5.8cm}
\pgfmathsetmacro{\mksz}{3pt}

\begin{scope}[scale=\scale]
\begin{axis}[ title={SNR $=0$ dB}, ylabel={Performance Metric $\Gamma_p$}, xmin=8, xmax=400, ymax=1.05, ymin=0, grid=major,  legend style={
        cells={anchor=east},
        legend pos= south east,
    }]

\pgfplotsset{every axis/.append style={
                    legend style={mark size=\mksz, style={nodes={right}}},
                    }}
\addplot+[mark size=\mksz, mark=diamond*]  table[x expr=\thisrowno{0}*10, x=iter, y=metric] {M20SNR0REP10.txt};
\addplot+[mark size=\mksz, mark=diamond]  table[x expr=\thisrowno{0}*10, x=iter, y=metric] {M20SNR0REP10_exh.txt};
\legend{Adaptive, Exhaustive};
\end{axis}
\end{scope}

\begin{scope}[xshift=\shift, scale=\scale]
\begin{axis}[xlabel={Time (Number of Sequential Training Samples)}, xmin=8, xmax=400, ymax=1.05, ymin=0, title={SNR $=-10$ dB}, yticklabels={}, grid=major,  legend style={
        cells={anchor=east},
        legend pos= south east,
    }]

\pgfplotsset{every axis/.append style={
                    legend style={mark size=\mksz, style={nodes={right}}},
                    }}
\addplot+[mark size=\mksz, mark=diamond*]  table[x expr=\thisrowno{0}*10, x=iter, y=metric] {M20SNR-10REP10.txt}; 
\addplot+[mark size=\mksz, mark=diamond]  table[x expr=\thisrowno{0}*10, x=iter, y=metric] {M20SNR-10REP10_exh.txt}; 
\legend{Adaptive, Exhaustive};
\end{axis}
\end{scope}

\begin{scope}[xshift=2*\shift, scale=\scale]
\begin{axis}[xmin=8, xmax=400, ymax=1.05, ymin=0, title={SNR $=-20$ dB}, yticklabels={}, grid=major,  legend style={
        cells={anchor=east},
        legend pos= north west,
    }]

\pgfplotsset{every axis/.append style={
                    legend style={mark size=\mksz, style={nodes={right}}},
                    }}
\addplot+[mark size=\mksz, mark=diamond*]  table[x expr=\thisrowno{0}*10, x=iter, y=metric] {M20SNR-20REP10.txt}; 
\addplot+[mark size=\mksz, mark=diamond]  table[x expr=\thisrowno{0}*10, x=iter, y=metric] {M20SNR-20REP10_exh.txt}; 
\legend{Adaptive, Exhaustive};
\end{axis}
\end{scope}
\end{tikzpicture}
\caption{Comparing the performance of our algorithm (Adaptive) with the exhaustive search (Exhaustive) for different SNR.}
\label{fig:perf_adap}
\end{figure*}
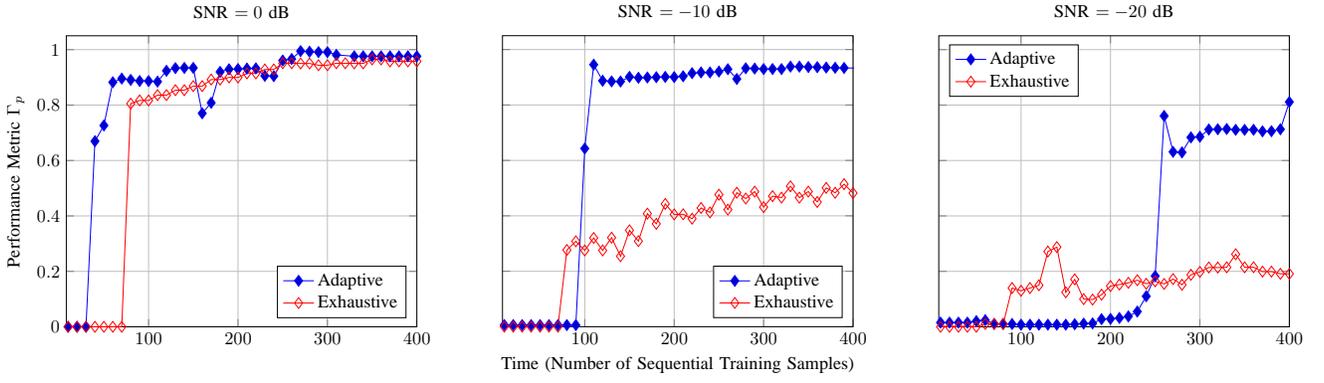
From Proposition \ref{pi_sol}, the function $\pi(\lambda)$ is decreasing and continuous, thus, we can find the optimal $\lambda^*$ via a bisection procedure. To do this we need to be able to find $\pi(\lambda)$ for every given $\lambda \in \bR_+$ from the maximization 
\begin{align}\label{lam_max_find}
\pi(\lambda)=\max_{\bfr \in \clA} \bfr^\transp(\bfD - \lambda \hat{\bfr} \hat{\bfr}^\transp) \bfr= \max_{\bfr \in \clA} \bfr^\transp \bfD_\lambda \bfr.
\end{align}
The main challenge is that although the set $\clA$ is convex, the quadratic form $Q_\lambda(\bfr)$ is in general non-concave (indefinite), and the maximization \eqref{lam_max_find} is difficult to do. It has been shown that by some change of variables, the maximization of every indefinite quadratic function in \eqref{lam_max_find} can be converted to a convex maximization \cite{pardalos1991global} that can be approximately solved using the global optimization techniques. But, there is still no guarantee that the resulting solution be globally optimal.

Similar to the ML estimation in Section \ref{sec:ML}, we apply the CCCP \cite{yuille2003concave} to find an approximate solution for \eqref{lam_max_find}. Let $\gamma >0$ be such that $\inp{\bfD - \gamma \hat{\bfr} \hat{\bfr}^\transp}{\hat{\bfr} \hat{\bfr}^\transp}=0$, where the inner product is the traditional matrix inner product. This implies that $\gamma=\frac{\inp{\hat{\bfr}}{\bfD \hat{\bfr}}}{\inp{\hat{\bfr}}{\hat{\bfr}}^2}$. We can decompose $\bfD -\gamma \hat{\bfr}\hat{\bfr}^\transp$ into the difference of two PSD matrices $\bfP:= \bfD - \gamma \hat{\bfr} \hat{\bfr}^\transp$ and $\bfN:=(\lambda+ \gamma) \hat{\bfr} \hat{\bfr}^\transp$, where we obtain $Q_\lambda(\bfr)= \inp{\bfr}{\bfP \bfr} - \inp{\bfr}{\bfN \bfr}$. Using the CCCP, we generate the following sequence of estimates for the optimal solution $\bfr^*$. We start with the initial point $\bfr_0=(1,0,\dots, 0)^\transp \in \bR^{2M-1}$. Suppose $\{\bfr_\ell\}_{\ell=0}^k$ is the sequence of estimates generated by CCCP. To find the next estimate $\bfr_{k+1}$, we first approximate the convex function $\inp{\bfr}{\bfP \bfr}$ by the linear term $\ell(\bfr;\bfr_k)=\inp{\bfr_k}{\bfP \bfr_k} + 2\inp{\bfP \bfr_k}{\bfr}$ around the latest estimate $\bfr_k$. Note that for every $\bfr \in \clA$, we have $\inp{\bfr}{\bfP \bfr} \geq \ell(\bfr;\bfr_k)$. We also define $\Upsilon(\bfr;\bfr_k):= \ell(\bfr;\bfr_k)- \inp{\bfr}{\bfN \bfr}$. Note that $\Upsilon(\bfr;\bfr_k)$ is a concave lower-bound for $Q_\lambda(\bfr)$ over $\clA$. We find the next estimate $\bfr_{k+1}$ by $\bfr_{k+1}=\argmax_{\bfr \in \clA} \Upsilon(\bfr;\bfr_k)$, which is a concave maximization that can be efficiently solved. It is not also difficult to show that for every $\bfr_k$ and $\bfr_{k+1}$ in the CCCP sequence, $Q_\lambda(\bfr_{k+1}) \geq Q_\lambda(\bfr_k)$, which implies that CCCP monotonically improves the cost function $Q_\lambda$.

Another difficulty for solving the optimization \eqref{lam_max_find} is that, from \eqref{inf_rep} the convex region $\clA$ is represented as the intersection of infinitely many half-spaces. In practice, we need to find a finite-dimensional approximation for $\clA$. One way is to approximate $\clA$ by a polyhedral set obtained as the intersection of finitely many subspaces given by 
\begin{align}\label{fin_rep}
\clA_{\Omega_p}=\cap_{u \in \Omega_p} \{\bfr: [\bfr]_0=1, \inpr{\frkc_\bfr}{\bfa(u)}\geq 0\},
\end{align}
where $\Omega_p=\{u_k\in [-1,1]: k=1,2,\dots, p\}$ is a finite grid of size $p$ in $[-1,1]$. 
In \cite{dumitrescu2007positive}, another approach has been given to approximate $\clA$ by embedding $\clA$ in higher-order Toeplitz matrices. More precisely, let $M'\geq M$ and define
\begin{align}
\clA_{M'}=\{\bfr \in \bC^M: [\bfr]_0=1, \bT([\bfr^\transp, {\bf 0}_{M'-M}]^\transp) \in \bT_+\}.
\end{align}
It is not difficult to see that $\clA_{M'}$ is a closed convex subset of $\bC^M$. Moreover, since the principal submatrices of a PSD matrix should be PSD as well, we have $\clA_{M'} \subset \clA_{M'+1}$ for every $M'\geq M$. In \cite{dumitrescu2007positive}, it has also been shown that $\clA= \clA_{\infty}=\cap_{M'=M}^\infty \clA_{M'}$, which is closed and convex as expected. In practice, we still need to use a finite but sufficiently large $M'\geq M$ to obtain a good approximation of $\clA$.

\section{Simulation Results}
For simulation, we consider an array of size $M=20$, and a user whose received power is uniformly distributed in the angular range $[-50,-48] \cup [10, 12]$ degrees. Fig.~\ref{fig:C_svd} shows the SVD of the covariance matrix of the user's signal. It is seen that a significant amount of user's signal power is concentrated in a subspace of dimension $2$.

Fig.~\ref{fig:perf_adap} shows the performance of our proposed algorithm for different training sample sizes and different pre-beamforming signal-to-noise ratios. 
We compare our algorithm with a trivial scheme that partitions the AoAs in $M$ equally-spaced bins, e.g. $\Delta_M=\{-\theta_m +\frac{2i\theta_m}{M}: i\in [M]\}$ for $\theta_m=\frac{\pi}{2}$, and in each step uses the array vector spanning one of these bins, i.e., $\bfa(\theta_b)$, $\theta_b \in \Delta_M$, to take a sample from the $M$-dimensional received signal in the array. The algorithm iteratively sweeps $\Delta_M$, and after each sweep finds the $2$ dominant angles with the highest received power.

\begin{figure}[t]
\centering
\begin{tikzpicture}[scale=0.55]
\pgfmathsetmacro{\mksz}{3pt}

\begin{axis}[ylabel={SVD}, , xmin=0.9, xmax=20, ymin=0, grid=major,  legend style={
        cells={anchor=east},
        legend pos= south east,
    }]

\pgfplotsset{every axis/.append style={
                    legend style={font=\scriptsize,mark size=\mksz},
                    }}
\addplot+[mark size=\mksz]  table[x=subspace, y=svd] {C_svd.txt};
\end{axis}
\end{tikzpicture}
\caption{SVD of signal's covariance matrix}
\label{fig:C_svd}
\end{figure}
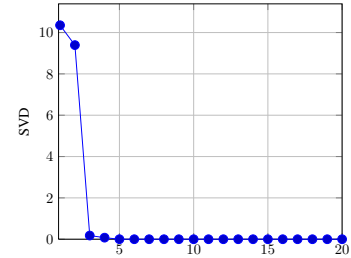

{\footnotesize
\bibliographystyle{IEEEtran}
\bibliography{paper.bbl}
}

\end{document}